\title{On Infinite Guarded Recursive Specifications\\ in Process Algebra}
\author{R.J. van Glabbeek\inst{1}\fnmsep\inst{2} 
        \and
        C.A. Middelburg\inst{3}}
\institute{Data61, CSIRO, Sydney, Australia
           \and
           School of Computer Science and Engineering, 
           University of New South Wales, \\
           Sydney, Australia \\
           \email{rvg@cs.stanford.edu}
           \and
           Informatics Institute, Faculty of Science, 
           University of Amsterdam, \\
           Amsterdam, the Netherlands \\
           \email{C.A.Middelburg@uva.nl}}
\begin{document}
\maketitle

\begin{abstract}
In most presentations of \ACP\ with guarded recursion, recur\-sive 
specifications are finite or infinite sets of recursion equations of 
which the right-hand sides are guarded terms.
The completeness with respect to bisimulation equivalence of the axioms 
of \ACP\ with guarded recursion has only been proved for the special 
case where recursive specifications are finite sets of recursion 
equations of which the right-hand sides are guarded terms of a 
restricted form known as linear terms.
In this note, we widen this completeness result to the general case.
\begin{keywords} 
process algebra, guarded recursion, completeness, 
infinitary conditional equational logic.
\end{keywords}%
\begin{classcode}
F.1.2, F.4.1
\end{classcode}
\end{abstract}

\section{Introduction}
\label{sect-intro}

In \ACP\ with guarded recursion, guarded recursive specifications, i.e.\ 
sets of recursion equations of which the right-hand sides are guarded 
terms, are used for recursive definitions of processes 
(see e.g.~\cite{BW90}).
In most cases where \ACP\ or a variant of it is extended with guarded 
recursion, guarded recursive specifications may be infinite.
Moreover, countably infinite guarded recursive specifications are used 
in many applications of the process algebras concerned.
Nevertheless, the completeness with respect to bisimulation equivalence 
of the axioms of \ACP\ with guarded recursion has only been proved for 
the special case where recursive specifications are finite sets of 
recursion equations of which the right-hand sides are guarded terms of a 
restricted form known as linear terms.

The second author of this note realized in March 2017 that the 
completeness proof given in~\cite{Fok00} for the above-mentioned special 
case could be widened to the general case.
He communicated this at the time with several colleagues and forgot 
about it until it was recently mentioned in~\cite{Gla20a}.
This mention motivated him to write a note about the general 
completeness result.
It is noteworthy that the proof of the fact on which the widening of the 
existing completeness proof is based (Theorem~\ref{theorem-reduction}) 
turned out to be less straightforward than the second author thought in 
March 2017 and comes from the first author.

In order to make this note self-contained, it contains short surveys of 
\ACP\ and its extension with guarded recursion.
We did not attach much importance to preventing any text overlap with 
surveys from earlier papers.

\section{Algebra of Communicating Processes}
\label{sect-ACP}

In this section, we give a survey of \ACP\ (Algebra of Communicating 
Processes).
For a comprehensive overview of \ACP, the reader is referred 
to~\cite{BW90,Fok00}.

In \ACP, it is assumed that a fixed but arbitrary set $\Act$ of
\emph{actions}, with $\dead \notin \Act$, has been given.
We write $\Actd$ for $\Act \union \set{\dead}$.
It is further assumed that a fixed but arbitrary commutative and 
associative \emph{communication} function 
$\funct{\commf}{\Actd \x \Actd}{\Actd}$, with 
$\commf(\dead,a) = \dead$ for all $a \in \Actd$, has been given.
The function $\commf$ is regarded to give the result of synchronously
performing any two actions for which this is possible, and to give 
$\dead$ otherwise.

The signature of \ACP\ consists of the following constants and 
operators:
\begin{itemize}
\item
for each $a \in \Act$, the \emph{action} constant 
$a$\,;
\item
the \emph{inaction} constant $\dead$\,;
\item
the binary \emph{alternative composition} operator 
$\ph \altc \ph$\,;
\item
the binary \emph{sequential composition} operator 
$\ph \seqc \ph$\,;
\item
the binary \emph{parallel composition} operator 
$\ph \parc \ph$\,;
\item
the binary \emph{left merge} operator 
$\ph \leftm \ph$\,;
\item
the binary \emph{communication merge} operator 
$\ph \commm \ph$\,;
\item
for each $H \subseteq \Act$, the unary \emph{encapsulation} operator
$\encap{H}$\,.
\end{itemize}
We assume that there is an infinite set $\cX$ of variables which 
contains $x$, $y$ and $z$ with and without subscripts.
Terms over the signature of \ACP, also referred to as \ACP\ terms, are 
built as usual.
We use infix notation for the binary operators.
The precedence conventions used with respect to the operators of \ACP\
are as follows: $\altc$ binds weaker than all others, $\seqc$ binds
stronger than all others, and the remaining operators bind equally
strong.

The constants of \ACP\ can be explained as follows ($a \in \Act$):
\begin{itemize}
\item
$\dead$ denotes the process that cannot do anything;
\item
$a$ denotes the process that first performs action $a$ and after that 
terminates successfully.
\end{itemize}
Let $t$ and $t'$ be closed \ACP\ terms denoting processes $p$ and $p'$.
Then the operators of \ACP\ can be explained as follows:
\begin{itemize}
\item
$t \altc t'$ denotes the process that behaves as $p$ or behaves as $p'$ 
(but not both);
\item
$t \seqc t'$\, denotes the process that first behaves as $p$ and on 
successful termina\-tion \nolinebreak[2] of $p$ next behaves as $p'$;
\item
$t \parc t'$ denotes the process that behaves as $p$ and $p'$ in 
parallel;
\item
$t \leftm t'$ denotes the same process as $t \parc t'$, except that it 
starts with performing an action of $p$;
\item
$t \commm t'$ denotes the same process as $t \parc t'$, except that it 
starts with performing an action of $p$ and an action of $p'$ 
synchronously;
\item
$\encap{H}(t)$ denotes the process that behaves the same as $p$, except 
that actions from $H$ are blocked.
\end{itemize}
The operators $\leftm$ and $\commm$ are of an auxiliary nature.
They are needed to axiomatize \ACP.

The axioms of \ACP\ are the equations given in Table~\ref{axioms-ACP}.
\begin{table}[!t]
\caption{Axioms of \ACP}
\label{axioms-ACP}
\begin{eqntbl}
\begin{axcol}
x \altc y = y \altc x                                  & \axiom{A1}   \\
(x \altc y) \altc z = x \altc (y \altc z)              & \axiom{A2}   \\
x \altc x = x                                          & \axiom{A3}   \\
(x \altc y) \seqc z = x \seqc z \altc y \seqc z        & \axiom{A4}   \\
(x \seqc y) \seqc z = x \seqc (y \seqc z)              & \axiom{A5}   \\
x \altc \dead = x                                      & \axiom{A6}   \\
\dead \seqc x = \dead                                  & \axiom{A7}   \\
{}                                                                    \\
\encap{H}(a) = a                \hfill \mif a \notin H & \axiom{D1}   \\
\encap{H}(a) = \dead            \hfill \mif a \in H    & \axiom{D2}   \\
\encap{H}(x \altc y) = \encap{H}(x) \altc \encap{H}(y) & \axiom{D3}   \\
\encap{H}(x \seqc y) = \encap{H}(x) \seqc \encap{H}(y) & \axiom{D4}  
\end{axcol}
\qquad \qquad
\begin{axcol}
x \parc y =
          x \leftm y \altc y \leftm x \altc x \commm y & \axiom{CM1}  \\
a \leftm x = a \seqc x                                 & \axiom{CM2}  \\
a \seqc x \leftm y = a \seqc (x \parc y)               & \axiom{CM3}  \\
(x \altc y) \leftm z = x \leftm z \altc y \leftm z     & \axiom{CM4}  \\
a \seqc x \commm b = (a \commm b) \seqc x              & \axiom{CM5}  \\
a \commm b \seqc x = (a \commm b) \seqc x              & \axiom{CM6}  \\
a \seqc x \commm b \seqc y =
                       (a \commm b) \seqc (x \parc y)  & \axiom{CM7}  \\
(x \altc y) \commm z = x \commm z \altc y \commm z     & \axiom{CM8}  \\
x \commm (y \altc z) = x \commm y \altc x \commm z     & \axiom{CM9}  \\
{}                                                                    \\
{}                                                                    \\
a \commm b = \commf(a,b)                               & \axiom{CF}  
\end{axcol}
\end{eqntbl}
\end{table}
In these equations, $a$ and $b$ stand for arbitrary constants of \ACP, 
and $H$ stands for an arbitrary subset of $\Act$.
In D1 and D2, side conditions restrict what $a$ and $H$ stand for.

In the sequel, we will use the sum notation $\Altc{i<n} t_i$.
Let $t_0, t_1, t_2, \ldots$ be terms over the signature of \ACP\ or an 
extension of \ACP.
Then $\Altc{i<0} t_i = \dead$ and, for each $n \in \Nat$ with $n > 0$,
the term $\Altc{i<n} t_i$ is defined by induction on $n$ as follows:
$\Altc{i<1} t_i = t_0$ and $\Altc{i<n+1} t_i =\Altc{i<n} t_i \altc t_n$.

\section{\ACP\ with Guarded Recursion}
\label{sect-ACPrec}

In this section, we give a survey of the extension of \ACP\ with guarded 
recursion.
For a comprehensive overview of guarded recursion in the setting of \ACP, 
the reader is referred to~\cite{BW90,Fok00}.

A closed \ACP\ term denotes a process with a finite upper bound to the 
number of actions that it can perform. 
Guarded recursion allows the description of processes without a finite 
upper bound to the number of actions that it can perform.

Let $t$ be a term over the signature of \ACP\ or an extension of \ACP\ 
in which a variable $X$ occurs.
Then an occurrence of $X$ in $t$ is \emph{guarded} if $t$ has a subterm 
of the form $a \seqc t'$ where $a \in \Act$ and $t'$ contains this 
occurrence of $X$.
An \ACP\ term $t$ is a guarded \ACP\ term if all occurrences of 
variables in $t$ are guarded.

A \emph{guarded recursive specification} over \ACP\ is a set 
$\set{X_i = t_i \where i \in I}$, 
where $I$ is a finite or infinite set, 
each $X_i$ is a variable from $\cX$, 
each $t_i$ is either a guarded \ACP\ term in which only variables from 
$\set{X_i \where i \in I}$ occur or an \ACP\ term rewritable to such a 
term using the axioms of \ACP\ in either direction and/or the equations 
in $\set{X_j = t_j \where j \in I \Land i \neq j}$ from left to right, 
and $X_i \neq X_j$ for all $i,j \in I$ with $i \neq j$.

We write $\vars(E)$, where $E$ is a guarded recursive specification, for 
the set of all variables that occur in $E$.
The equations occurring in a guarded recursive specification are called 
\emph{recursion equations}.

A solution of a guarded recursive specification $E$ in some model of 
\ACP\ is a set $\set{p_X \where X \in \vars(E)}$ of elements of the 
carrier of that model such that each equation in $E$ holds if, for all 
$X \in \vars(E)$, $X$ is assigned $p_X$.
We are only interested in models of \ACP\ in which guarded recursive 
specifications have unique solutions. 

We extend \ACP\ with guarded recursion by adding constants for solutions 
of guarded recursive specifications over \ACP\ and axioms concerning 
these additional constants.
For each guarded recursive specification $E$ over \ACP\ and each 
$X \in \vars(E)$, we add a constant $\rec{X}{E}$ that stands for the 
unique solution of $E$ for $X$ to the constants of \ACP.
We add the equation RDP and the conditional equation RSP given in 
Table~\ref{axioms-REC} to the axioms of \ACP.
\begin{table}[!t]
\caption{Axioms for guarded recursion}
\label{axioms-REC}
\begin{eqntbl}
\begin{saxcol}
\rec{X}{E} = \rec{t}{E} & \mif X = t\; \in \;E          & \axiom{RDP} \\
E \Limpl X = \rec{X}{E}   & \mif X \in \vars(E)         & \axiom{RSP} 
\end{saxcol}
\end{eqntbl}
\end{table}
In RDP and RSP, $X$ stands for an arbitrary variable from $\cX$, $t$ 
stands for an arbitrary \ACP\ term, $E$ stands for an arbitrary 
guarded recursive specification over \ACP, and the notation $\rec{t}{E}$ 
is used for $t$ with, for all $X \in \vars(E)$, all occurrences of $X$ 
in $t$ replaced by $\rec{X}{E}$.
Side conditions restrict what $X$, $t$ and $E$ stand for.
We write $\ACPr$ for the resulting theory.
Terms over the signature of \ACPr\ are also referred to as \ACPr\ terms.

The equations $\rec{X}{E} = \rec{t}{E}$ and the conditional equations 
\mbox{$E \Limpl X \!=\! \rec{X}{E}$} for a fixed $E$ express that the 
constants $\rec{X}{E}$ make up a solution of $E$ and that this solution 
is the only one.

Because we have to deal with conditional equational formulas with an 
infinite number of premises in \ACPr, it is understood that infinitary 
conditional equational logic is used in deriving equations from the 
axioms of \ACPr.
A complete inference system for infinitary conditional equational logic 
can be found in~\cite{GV93}.
It is noteworthy that in the case of infinitary conditional equational 
logic derivation trees may be infinitely branching (but they may not 
have infinite branches).

We write $T \Ent t = t'$, where $T$ is \ACP\ or \ACPr, to indicate that 
the equation $t = t'$ is derivable from the axioms of $T$ using a 
complete inference system for infinitary conditional equational logic.

\section{Linear Recursive Specifications}
\label{sect-linear-ACPrec}

In this section, we show that each guarded recursive specification over 
\ACP\ can be reduced to one in which the right-hand sides of recursion 
equations are guarded terms of a restricted form known as linear terms.
This result will be used in Section~\ref{sect-complete-ACPrec}.
In its proof, we make use of the fact that each guarded \ACP\ term has a 
head normal form.

Let $T$ be \ACP\ or \ACPr.
The set $\HNF$ of \emph{head normal forms of $T$} is inductively 
defined by the following rules:
\begin{itemize}
\item 
$\dead \in \HNF$;
\item 
if $a \in \Act$, then $a \in \HNF$;
\item 
if $a \in \Act$ and $t$ is a term over the signature of $T$, then 
$a \seqc t \in \HNF$;
\item 
if $t,t' \in \HNF$, then $t \altc t '\in \HNF$.
\end{itemize}
Each head normal form of $T$ is derivably equal to a head normal form 
of the form $\Altc{i < n} a_i \seqc t_i \altc \Altc{j < m} b_i$, where 
$n,m \in \Nat$, for each $i < n$, $a_i \in \Act$ and $t_i$ is a term over 
the signature of $T$, and, for each $j < m$, $b_j \in \Act$. 

Each guarded \ACPr\ term is derivably equal to a head normal form of 
\ACPr.
\begin{proposition}[Head normal form]
\label{prop-HNF-ACP}
For each guarded \ACPr\ term $t$, there exists a head normal form $t'$ 
of \ACPr\ such that $\ACPr \Ent t = t'$.
\end{proposition}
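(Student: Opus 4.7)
The plan is to prove this by structural induction on the guarded \ACPr\ term $t$. The base cases $t=a$ (for $a\in\Act$) and $t=\dead$ are immediate since both are already head normal forms. In the inductive cases for the operators of \ACP, I would apply the induction hypothesis to the relevant (guarded) subterms of $t$ to obtain HNFs and then combine these into a HNF of $t$ using the axioms: for $\altc$, HNFs are closed under $\altc$ by definition; for $\seqc$, put the left operand in HNF and distribute using A4 and A5; for $\parc$, use CM1 to reduce to $\leftm$ and $\commm$, and then handle $\leftm$ via CM2--CM4 and $\commm$ via CM5--CM9 together with CF, in each case after first putting the operands in HNF by the induction hypothesis; and for $\encap{H}$, put the argument in HNF and distribute using D1--D4. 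At each step, guardedness of the sub\-terms involved follows from the observation that the top-level operator of $t$ is not an action prefix, so any variable occurring in a subterm must already be guarded within that subterm.

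The only case not reducible by structural induction alone is $t=\rec{X}{E}$. Here I would apply RDP to rewrite $\rec{X}{E}=\rec{t_X}{E}$, where $X=t_X$ is the defining equation of $X$ in $E$; without loss of generality $t_X$ is a guarded \ACP\ term (otherwise first rewrite it to such, as permitted by the definition of a guarded recursive specification). Since $\rec{t_X}{E}$ is not a structural subterm of $\rec{X}{E}$, I would prove the existence of a HNF for $\rec{t_X}{E}$ by a nested structural induction on $t_X$, proceeding exactly as above for the \ACP\ operators but with each variable $Y\in\vars(E)$ occurring in $t_X$ replaced by the constant $\rec{Y}{E}$.

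The crucial point that makes the nested induction terminate is the guardedness of $t_X$: every variable occurrence in $t_X$ lies beneath an action prefix. Consequently, whenever the nested induction reaches such a subterm, the corresponding HNF summand is simply $a\seqc\rec{u}{E}$ for some body $u$, and no further unfolding of the recursion constants appearing in $u$ is needed, since HNF summands allow arbitrary terms as tails. This is precisely what prevents an infinite regress of RDP unfoldings. The main obstacle is exactly this recursion constant case; once the role of guardedness is pinned down as above, the remainder of the argument is a routine manipulation of the axioms of \ACP\ and \ACPr.
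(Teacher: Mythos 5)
Your proposal is correct and takes essentially the same route as the paper: a structural induction in which the \ACP-operator cases are discharged by putting the relevant subterms in head normal form and distributing with the axioms, and the $\rec{X}{E}$ case is handled by one application of RDP to a guarded body followed by a further induction on that body (which the paper packages as a separately stated ``weaker result'' for guarded \ACP\ terms rather than as your inlined nested induction). Your identification of guardedness as the reason no infinite regress of RDP-unfoldings occurs matches the role it plays in the paper's argument.
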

\begin{proof}
First we prove the following weaker result about head normal forms:
\begin{quote}
\emph{For each guarded \ACP\ term $t$, there exists a head normal form 
$t'$ of \ACP\ such that $\ACP \Ent t = t'$}.
\end{quote}
The proof is straightforward by induction on the structure of $t$.
The case where $t$ is of the form $\dead$ and the case where $t$ is of 
the form $a$ ($a \in \Act$) are trivial.
The case where $t$ is of the form $t_1 \altc t_2$ follows immediately 
from the induction hypothesis.
The case where $t$ is of the form $t_1 \seqc t_2$ follows immediately 
from the induction hypothesis and the claim that, for all head normal 
forms $t_1$ and $t_2$ of \ACP, there exists a head normal form $t'$ of 
\ACP\ such that $t_1 \seqc t_2 = t'$ is derivable from the axioms of 
\ACP.
This claim is easily proved by induction on the structure of~$t_1$.
The cases where $t$ is of one of the forms $t_1 \leftm t_2$, 
$t_1 \commm t_2$ or $\encap{H}(t_1)$ are proved along the same lines as 
the case where $t$ is of the form $t_1 \seqc t_2$.
In the case that $t$ is of the form $t_1 \commm t_2$, each of the cases 
to be considered in the inductive proof of the claim demands a proof by 
induction on the structure of~$t_2$.
The case that $t$ is of the form $t_1 \parc t_2$ follows immediately 
from the case that $t$ is of the form $t_1 \leftm t_2$ and the case that 
$t$ is of the form $t_1 \commm t_2$.
Because~$t$ is a guarded \ACP\ term, the case where $t$ is a variable 
cannot occur.

The proof of the proposition itself is also straightforward by induction 
on the structure of $t$.
The cases other than the case where $t$ is of the form $\rec{X}{E}$ is 
proved in the same way as in the above proof of the weaker result.
The case where $t$ is of the form $\rec{X}{E}$ follows immediately from
the weaker result and RDP.
\qed
\end{proof}

The set $\LT$ of \emph{linear \ACP\ terms} is inductively defined by 
the following rules:
\begin{itemize}
\item
$\dead \in \LT$;
\item
if $a \in \Act$, then $a \in \LT$;
\item
if $a \in \Act$ and $X \in \cX$, then $a \seqc X \in \LT$;
\item
if $t,t' \in \LT$, then $t \altc t' \in \LT$.
\end{itemize}
Clearly, each linear \ACP\ term is also a guarded \ACP\ term (but not 
vice versa).

A \emph{linear recursive specification} over \ACP\ is a guarded
recursive specification $E$ over \ACP\ such that, for each equation 
$X = t \,\in\, E$, $t \in \LT$.

Each guarded recursive specification over \ACP\ can be reduced to a
linear recursive specification over \ACP. 
\begin{theorem}[Reduction]
\label{theorem-reduction}
\sloppy
For each guarded recursive specification $E$ over \ACP\ and each 
$X \in \vars(E)$, there exists a finite or countably infinite linear 
recursive specification $E'$ over \ACP\ such that 
$\ACPr \Ent \rec{X}{E} = \rec{X}{E'}$.
\end{theorem}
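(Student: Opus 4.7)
The plan is to construct $E'$ by iterating Proposition~\ref{prop-HNF-ACP}. Starting from $\rec{X}{E}$, I repeatedly replace each closed \ACPr\ term by a head normal form, collect the action continuations appearing in these head normal forms, and introduce a fresh variable for each collected term. Because every closed \ACPr\ term is vacuously guarded (all its variable occurrences are guarded, since there are none), Proposition~\ref{prop-HNF-ACP} applies to every term encountered in this process.

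Concretely, for every closed \ACPr\ term $t$, use Proposition~\ref{prop-HNF-ACP} to fix a head normal form
$h(t) = \Altc{i < n_t} a_i^t \seqc s_i^t \altc \Altc{j < m_t} b_j^t$
with $\ACPr \Ent t = h(t)$. Put $T_0 = \set{\rec{X}{E}}$ and
$T_{k+1} = T_k \union \set{s_i^t \where t \in T_k \Land i < n_t}$ for $k \in \Nat$, and set
$T = \bigcup_{k \in \Nat} T_k$. Since each head normal form has only finitely many continuations, $T$ is finite or countably infinite. Assign a distinct fresh variable $Y_t$ to each $t \in T$, with $Y_{\rec{X}{E}} = X$, and let
\[
E' = \set{Y_t = \Altc{i < n_t} a_i^t \seqc Y_{s_i^t} \altc \Altc{j < m_t} b_j^t \where t \in T}.
\]
Each right-hand side is a linear \ACP\ term and the left-hand variables are pairwise distinct, so $E'$ is a finite or countably infinite linear recursive specification over \ACP\ with $X \in \vars(E')$.

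It remains to derive $\ACPr \Ent \rec{X}{E} = \rec{X}{E'}$, which I do by invoking RSP on $E'$: it suffices to verify that the assignment $Y_t \mapsto t$ (for $t \in T$) satisfies every equation of $E'$ in \ACPr. Substituting this assignment into the $t$-equation of $E'$ yields exactly $t = h(t)$, which is derivable by construction. Applying RSP therefore gives $\ACPr \Ent t = \rec{Y_t}{E'}$ for every $t \in T$, and taking $t = \rec{X}{E}$ delivers the desired equation. I expect the main obstacle to be precisely this appeal to RSP: its premise is the possibly infinite set of equations in $E'$, each of which must be individually derivable under the substitution, so the derivation has to be carried out in the infinitary conditional equational logic used for \ACPr. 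Beyond that point, the argument is essentially bookkeeping on top of Proposition~\ref{prop-HNF-ACP}.
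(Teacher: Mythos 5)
Your proposal is correct and follows essentially the same route as the paper's own proof: iterate head normalization (Proposition~\ref{prop-HNF-ACP}) starting from $\rec{X}{E}$, introduce a fresh variable for each continuation term encountered, collect the resulting countably many linear equations into $E'$, and discharge the (possibly infinitely many) premises of RSP via the substitution sending each new variable to the term it names. The only differences are presentational --- you close off a set of terms under taking continuations and reuse one variable per distinct term, whereas the paper runs a worklist algorithm and allocates a fresh variable per occurrence --- and your observation that closed terms are vacuously guarded neatly replaces the paper's use of RDP to guard the bodies $t_k$ first.
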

\begin{proof}
We approach this algorithmically.
In the construction of the linear recursive specification $E'$, we keep
a set $V$ of recursion equations from $E'$ that are already found and a
sequence $W$ of equations of the form $X_k = \rec{t_k}{E}$ that still 
have to be transformed.
The algorithm has a finite or countably infinite number of stages.
In each stage, $V$ and $W$ are finite.
Initially, $V$ is empty and $W$ contains only the equation 
$X_0 = \rec{X}{E}$.

In each stage, we remove the first equation from $W$.
Assume that this equation is $X_k = \rec{t_k}{E}$. 
We bring the term $\rec{t_k}{E}$ into head normal form. 
If $t_k$ is not a guarded term, then we use RDP here to turn $t_k$ into 
a guarded term first.
Thus, by Proposition~\ref{prop-HNF-ACP}, we can always bring the term 
$\rec{t_k}{E}$ into head normal form.
Assume that the resulting head normal form is
$\Altc{i<n} a_i \seqc t'_i \altc \Altc{j<m} b_j$.
Then, we add the equation 
$X_k = \Altc{i<n} a_i \seqc X_{k+i+1} \altc  \Altc{j<m} b_j$,
where the $X_{k+i+1}$ are fresh variables, to the set $V$.
Moreover, for each $i < n$, we add the equation $X_{k+i+1} = t'_i$ to 
the end of the sequence $W$.
Notice that the terms $t'_i$ are of the form $\rec{t_{k+i+1}}{E}$.

Because $V$ grows monotonically, there exists a limit. 
That limit is the finite or countably infinite linear recursive 
specification $E'$.
Every equation that is added to the finite sequence $W$, is also removed 
from it.
Therefore, the right-hand side of each equation from $E'$ only contains
variables that also occur as the left-hand side of an equation from 
$E'$.

Now, we want to use RSP to show that 
$\ACPr \Ent \rec{X}{E} = \rec{X}{E'}$.
The variables occurring in $E'$ are $X_0, X_1, X_2, \ldots\;$.
For each $k$, the variable $X_k$ has been exactly once in $W$ as the 
left-hand side of an equation.
For each $k$, assume that  this equation is $X_k = \rec{t_k}{E}$.
To use RSP, we have to show for each $k$ that the equation
$X_k = \Altc{i<n} a_i \seqc X_{k+i+1} \altc  \Altc{j<m} b_j$ from $E'$
with, for each $l$, all occurrences of $X_l$ replaced by $\rec{t_l}{E}$
is derivable from the axioms of \ACPr.
For each $k$, this follows from the construction.
\qed
\end{proof}
An immediate corollary of Theorem~\ref{theorem-reduction} is the 
following expressiveness result: \linebreak[2]
in each model of \ACPr, the processes that can be described by a guarded 
recursive specification over \ACP\ and the processes that can be 
described by a finite or countably infinite linear recursive 
specification over \ACP\ are the same.

\section{Semantics of \ACP\ with Guarded Recursion}
\label{sect-semantics-ACPr}

In this section, we present a structural operational semantics of 
\ACPr\ and define a notion of bisimulation equivalence based on this 
semantics.

We start with presenting a structural operational semantics of \ACPr.
The following relations on closed \ACPr\ terms are used:
\begin{itemize}
\item 
for each $a \in \Act$, a unary relation $\aterm{}{a}$\,;
\item 
for each $a \in \Act$, a binary relation $\astep{}{a}{}$\,.
\end{itemize}
We write $\aterm{t}{a}$ instead of ${\aterm{}{a}}\,(t)$ and
$\astep{t}{a}{t'}$ instead of ${\astep{}{a}{}}\,(t,t')$.
The relations $\aterm{}{a}$ and $\astep{}{a}{}$ can be explained as 
follows:
\begin{itemize}
\item
$\aterm{t}{a}$: 
$t$ can perform action $a$ and then terminate successfully;
\item
$\astep{t}{a}{t'}$: 
$t$ can perform action $a$ and then behave as $t'$.
\end{itemize}
The structural operational semantics of \ACPr\ is described by the
rules given in Table~\ref{rules-ACPr}.
\begin{table}[!t]
\caption{Rules for the operational semantics of \ACPr}
\label{rules-ACPr}
\begin{ruletbl}
\Rule
{}
{\aterm{a}{a}}
\qquad
\\
\Rule
{\aterm{x}{a}}
{\aterm{x \altc y}{a}}
\qquad
\Rule
{\aterm{y}{a}}
{\aterm{x \altc y}{a}}
\qquad
\Rule
{\astep{x}{a}{x'}}
{\astep{x \altc y}{a}{x'}}
\qquad
\Rule
{\astep{y}{a}{y'}}
{\astep{x \altc y}{a}{y'}}
\\
\Rule
{\aterm{x}{a}}
{\astep{x \seqc y}{a}{y}}
\qquad
\Rule
{\astep{x}{a}{x'}}
{\astep{x \seqc y}{a}{x' \seqc y}}
\\
\Rule
{\aterm{x}{a}}
{\astep{x \parc y}{a}{y}}
\qquad
\Rule
{\aterm{y}{a}}
{\astep{x \parc y}{a}{x}}
\qquad
\Rule
{\astep{x}{a}{x'}}
{\astep{x \parc y}{a}{x' \parc y}}
\qquad
\Rule
{\astep{y}{a}{y'}}
{\astep{x \parc y}{a}{x \parc y'}}
\\
\RuleC
{\aterm{x}{a},\; \aterm{y}{b}}
{\aterm{x \parc y}{c}}
{\commf(a,b) = c}
\qquad
\RuleC
{\aterm{x}{a},\; \astep{y}{b}{y'}}
{\astep{x \parc y}{c}{y'}}
{\commf(a,b) = c}
\\
\RuleC
{\astep{x}{a}{x'},\; \aterm{y}{b}}
{\astep{x \parc y}{c}{x'}}
{\commf(a,b) = c}
\qquad
\RuleC
{\astep{x}{a}{x'},\; \astep{y}{b}{y'}}
{\astep{x \parc y}{c}{x' \parc y'}}
{\commf(a,b) = c}
\\
\Rule
{\aterm{x}{a}}
{\astep{x \leftm y}{a}{y}}
\qquad
\Rule
{\astep{x}{a}{x'}}
{\astep{x \leftm y}{a}{x' \parc y}}
\\
\RuleC
{\aterm{x}{a},\; \aterm{y}{b}}
{\aterm{x \commm y}{c}}
{\commf(a,b) = c}
\qquad
\RuleC
{\aterm{x}{a},\; \astep{y}{b}{y'}}
{\astep{x \commm y}{c}{y'}}
{\commf(a,b) = c}
\\
\RuleC
{\astep{x}{a}{x'},\; \aterm{y}{b}}
{\astep{x \commm y}{c}{x'}}
{\commf(a,b) = c}
\qquad
\RuleC
{\astep{x}{a}{x'},\; \astep{y}{b}{y'}}
{\astep{x \commm y}{c}{x' \parc y'}}
{\commf(a,b) = c}
\\
\RuleC
{\aterm{x}{a}}
{\aterm{\encap{H}(x)}{a}}
{a \not\in H}
\qquad
\RuleC
{\astep{x}{a}{x'}}
{\astep{\encap{H}(x)}{a}{\encap{H}(x')}}
{a \not\in H}
\\
\RuleC
{\aterm{\rec{t}{E}}{a}}
{\aterm{\rec{X}{E}}{a}}
{X = t\; \in \;E}
\qquad
\RuleC
{\astep{\rec{t}{E}}{a}{x'}}
{\astep{\rec{X}{E}}{a}{x'}}
{X = t\; \in \;E}
\end{ruletbl}
\end{table}
In these tables, $a$, $b$, and $c$ stand for arbitrary actions from 
$\Act$, 
$X$~stands for an arbitrary variable from $\cX$, 
$t$~stands for an arbitrary \ACP\ term, and 
$E$~stands for an arbitrary guarded recursive specification over \ACP.

A \emph{bisimulation} is a binary relation $R$ on closed \ACPr\ terms 
such that, for all closed \ACPr\ terms $t_1,t_2$ with $R(t_1,t_2)$, the 
following conditions hold: 
\begin{itemize}
\item
if $\astep{t_1}{a}{t_1'}$, then there exists a closed \ACPr\ term $t_2'$ 
such that $\astep{t_2}{a}{t_2'}$ and~$R(t_1',t_2')$;
\item
if $\astep{t_2}{a}{t_2'}$, then there exists a closed \ACPr\ term $t_1'$ 
such that $\astep{t_1}{a}{t_1'}$ and~$R(t_1',t_2')$;
\item
$\aterm{t_1}{a}$ iff $\aterm{t_2}{a}$.
\end{itemize}%
Two closed \ACPr\ terms $t_1,t_2$ are \emph{bisimulation equivalent}, 
written $t_1 \bisim t_2$, if there exists a bisimulation $R$ such that 
$R(t_1,t_2)$.

\begin{proposition}[Congruence]
\label{prop-prob-bisim-congr}
${} \bisim {}$ is a congruence with respect to the operators of \ACPr.
\end{proposition}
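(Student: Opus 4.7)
The plan is the standard one: define a relation $R$ on closed \ACPr\ terms as the smallest one containing $\bisim$ and closed under every operator of \ACPr, and then show $R$ is itself a bisimulation.  Once this is established, $R \subseteq {\bisim}$ follows, and combined with the closure property of $R$ this is precisely the congruence statement.  Note that for the nullary operators (the action constants, $\dead$, and each $\rec{X}{E}$) congruence is vacuous, so the real content is limited to the binary operators $\altc$, $\seqc$, $\parc$, $\leftm$, $\commm$ and the unary operator $\encap{H}$.

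To prove $R$ is a bisimulation, I would take a pair with $u \mathrel{R} v$ and an outgoing transition $\astep{u}{a}{u'}$ (terminations $\aterm{u}{a}$ being treated analogously), and proceed by induction on the depth of the derivation of this transition.  If $u \bisim v$, the bisimulation property of $\bisim$ supplies the required match directly.  Otherwise $u = f(\vec{s})$ and $v = f(\vec{t})$ with $\vec{s} \mathrel{R} \vec{t}$ componentwise, and the last rule applied in the derivation of $\astep{u}{a}{u'}$ has premises stating transitions or terminations of the $s_i$ (for the recursion rules, of terms built from the constants $\rec{X_j}{E}$).  By the induction hypothesis these premises can be matched from the $t_i$ side while preserving $R$ on the resulting targets; applying the same operational rule then yields a transition $\astep{v}{a}{v'}$ whose target $v'$ is built from the matched sub-targets in the same way $u'$ is built from the original ones.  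Since $R$ is closed under operators, $u' \mathrel{R} v'$ follows.

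The bookkeeping is routine for each rule of Table~\ref{rules-ACPr}.  The cases for alternative composition and encapsulation are immediate; for sequential composition one splits on whether the left argument terminates or transitions.  The main obstacle is purely the case count for the seven rules governing $\parc$ together with the auxiliary merges $\leftm$ and $\commm$, each of which involves the side condition $\commf(a,b) = c$ and sub-cases corresponding to termination versus transition on each of the two sides.  An alternative and even shorter route would be to observe that all rules in Table~\ref{rules-ACPr} are in a well-studied SOS rule format (for example the path format) for which strong bisimilarity is known to be a congruence as a meta-theorem; then congruence follows by a routine verification of the format conditions.
\qed
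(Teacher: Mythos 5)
The paper does not actually prove this proposition: it only states it and refers the reader to~\cite{BW90} for a proof, so there is no in-paper argument to compare yours against. Your proposal is the standard and correct argument, and is essentially what the cited literature does: form the closure $R$ of $\bisim$ under the operators, show $R$ is a bisimulation by induction on the derivation of a transition, and conclude $R \subseteq {\bisim}$. Two details you handle correctly and that are worth making explicit if you write this out in full: the induction must be on the proof tree of the transition rather than on the structure of the term, precisely because the recursion rules have a premise about $\rec{t}{E}$, which is in general a larger term than the source $\rec{X}{E}$ of the conclusion (the induction is nonetheless well founded, since the transition relation is the least one closed under the rules, so every derivable transition has a finite derivation); and the constants $\rec{X}{E}$ are indeed nullary, so they contribute nothing to the congruence claim and enter the bisimulation proof for $R$ only through reflexivity of $\bisim$. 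Your alternative route via a rule format is also sound: the rules of Table~\ref{rules-ACPr} have only positive premises, and the premises of the recursion rules have closed terms as sources, which is permitted in the path format (which also accommodates the unary predicates $\aterm{}{a}$), so the congruence meta-theorem applies. The format route buys brevity and robustness under extensions of the signature; the direct route is self-contained and avoids invoking the meta-theory.
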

The axioms of \ACPr\ are sound with respect to bisimulation equivalence 
for equations between closed terms.
\begin{theorem}[Soundness]
\label{theorem-soundness-ACPr}
For all closed \ACPr\ terms $t$ and $t'$, $t \bisim t'$ if 
$\ACPr \Ent t = t'$.
\end{theorem}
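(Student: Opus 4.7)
The plan is to induct on the derivation of $t = t'$ in infinitary conditional equational logic. To handle substitution and the rule for discharging the premises of a conditional equation uniformly, I would strengthen the statement to conditional equations with free variables: if $\bigwedge_k s_k = s_k' \Limpl s = s'$ is derivable, then for every closed substitution that makes all premises hold up to $\bisim$, the conclusion holds up to $\bisim$. Reflexivity, symmetry, and transitivity of $=$ are immediate from the corresponding properties of $\bisim$, and the congruence rule for the \ACPr\ operators is exactly Proposition~\ref{prop-prob-bisim-congr}.

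For each equational axiom (A1--A7, CM1--CM9, CF, D1--D4) and every closed instantiation, a small bisimulation witnessing soundness can be read off directly from the operational rules of Table~\ref{rules-ACPr}; these cases are routine. Soundness of RDP is similarly immediate because the two operational rules for $\rec{X}{E}$ when $X = t \in E$ are designed precisely to give $\rec{X}{E}$ the same transitions as $\rec{t}{E}$, so the identity on closed \ACPr\ terms extended by the single pair $(\rec{X}{E}, \rec{t}{E})$ is a bisimulation.

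The main obstacle is RSP, which I would reduce to the fact that every guarded recursive specification has at most one solution up to $\bisim$ in the operational model. Given two families $\set{p_X \where X \in \vars(E)}$ and $\set{q_X \where X \in \vars(E)}$ of closed \ACPr\ terms both satisfying $E$ up to $\bisim$, I would establish $p_X \bisim q_X$ by exhibiting a bisimulation-up-to-$\bisim$ whose base consists of all pairs $(u[\vec p / \vec X], u[\vec q / \vec X])$ with $u$ an \ACPr\ term having free variables in $\vars(E)$. The hard part is the transfer condition: a transition $\astep{u[\vec p / \vec X]}{a}{s}$ must be matched by $u[\vec q / \vec X]$. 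I would handle this by analysing the structure of $u$ and, where $u$ exhibits unguarded free variable occurrences, first unfolding them using the defining equations of $E$ together with the assumed satisfaction up to $\bisim$; guardedness of $E$ ensures this unfolding terminates, producing a term in which every free variable is guarded, which by the argument underlying Proposition~\ref{prop-HNF-ACP} admits a head normal form $\Altc{i<n} a_i \seqc v_i \altc \Altc{j<m} b_j$. Both substitutions into this head normal form induce the same transition shape, and each residual pair $(v_i[\vec p / \vec X], v_i[\vec q / \vec X])$ again lies in the base. Since the constants $\rec{X}{E}$ themselves form a solution of $E$ by their operational rules, uniqueness then gives $p_X \bisim \rec{X}{E}$ for every solution $\set{p_X}$, establishing soundness of RSP and completing the induction.
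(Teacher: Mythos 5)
Your proposal is essentially the standard soundness argument; the paper itself gives no proof of this theorem but defers to~\cite{BW90}, where the same route is taken: well-founded induction on derivations in infinitary conditional equational logic, explicit witnessing bisimulations for the equational axioms and RDP, and soundness of RSP via uniqueness up to $\bisim$ of solutions of guarded recursive specifications, established with a bisimulation up to $\bisim$ and head normal forms exactly as you sketch. The one detail to watch is that the paper's definition of guarded recursive specification only requires each right-hand side to be \emph{rewritable} to a guarded term, so your unfolding step should first pass to that guarded rewrite (which the assumed solutions still satisfy up to $\bisim$, the rewriting using only \ACP\ axioms and equations of $E$); with that noted, the argument goes through.
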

The proofs of Proposition~\ref{prop-prob-bisim-congr} and 
Theorem~\ref{theorem-soundness-ACPr} can, for example, be found 
in~\cite{BW90}.

\section{Completeness of \ACP\ with Guarded Recursion}
\label{sect-complete-ACPrec}

It follows from Theorem~\ref{theorem-reduction} and the completeness 
proof given in~\cite{Fok00} for the special case of finite linear 
recursive specifications over \ACP\ that the axioms of \ACPr\ are also 
complete with respect to bisimulation equivalence for equations between 
closed terms.
\begin{theorem}[Completeness]
\label{theorem-completeness}
For all closed \ACPr\ terms $t$ and $t'$, $t \bisim t'$ only if
$\ACPr \Ent t = t'$.
\end{theorem}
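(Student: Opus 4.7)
The plan is to reduce to the case of linear recursive specifications via Theorem~\ref{theorem-reduction} and then adapt the completeness argument of~\cite{Fok00} from finite to finite-or-countably-infinite linear recursive specifications.

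First I would show that every closed \ACPr\ term $s$ is derivably equal to $\rec{X_s}{E_s}$ for some guarded recursive specification $E_s$ over \ACP. One can do this by collecting all subterms of $s$ of the form $\rec{Y_k}{F_k}$, renaming variables so that the $F_k$ have pairwise disjoint variable sets, replacing each such constant in $s$ by its bound variable $Y_k$ to obtain an \ACP\ term $s^*$, and taking $E_s$ to be $\set{X_s = s^*}$ together with the union of the $F_k$, for a fresh $X_s$. The right-hand sides fall under the relaxed definition of guarded recursive specification, and by RDP we have $\ACPr \Ent s = \rec{X_s}{E_s}$. By Theorem~\ref{theorem-reduction} I may further replace $E_s$ by a finite or countably infinite linear recursive specification. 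So we may assume $t = \rec{X}{E_1}$ and $t' = \rec{X'}{E_2}$ with $E_1, E_2$ linear; after renaming to make the variable sets disjoint, $E := E_1 \union E_2$ is a single linear recursive specification over \ACP\ with $t = \rec{X}{E}$, $t' = \rec{X'}{E}$, and $\rec{X}{E} \bisim \rec{X'}{E}$.

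Next I would carry out the standard paired-variable construction. For every pair $(Y_1, Y_2)$ of variables of $E$ with $\rec{Y_1}{E} \bisim \rec{Y_2}{E}$, introduce a fresh variable $Z_{Y_1, Y_2}$. Writing the linear right-hand sides of $Y_1$ and $Y_2$ in $E$ as $\Altc{i<n_1} a_i \seqc Y_{1,i} \altc \Altc{j<m_1} b_j$ and $\Altc{i<n_2} c_i \seqc Y_{2,i} \altc \Altc{j<m_2} d_j$, the transfer property of bisimulation guarantees that every action-prefixed summand on either side can be matched by a summand on the other side leading to a bisimilar continuation, and that the terminating summands on both sides coincide. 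I form a linear right-hand side for $Z_{Y_1, Y_2}$ that, for each summand on each side, picks such a match and uses the corresponding $Z_{Y_{1,i}, Y_{2,i'}}$ as continuation. The resulting specification $F$ is linear and has at most countably many equations because $E$ does, and is therefore a valid linear recursive specification over \ACP\ allowed by \ACPr.

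By construction, the assignment sending each $Z_{Y_1, Y_2}$ to $\rec{Y_1}{E}$ derives every equation of $F$ from the axioms of \ACPr\ together with the equations of $E$ (via RDP), and so does the assignment sending each $Z_{Y_1, Y_2}$ to $\rec{Y_2}{E}$. Hence, by RSP applied inside \ACPr, $\rec{Y_1}{E} = \rec{Z_{Y_1, Y_2}}{F} = \rec{Y_2}{E}$ whenever $\rec{Y_1}{E} \bisim \rec{Y_2}{E}$; in particular $t = \rec{X}{E} = \rec{X'}{E} = t'$. The main obstacle is the careful definition of the right-hand sides of $F$ (choosing matches on both sides, checking that $F$ is genuinely linear and that every variable used on the right appears on some left-hand side) and the verification that both assignments satisfy $F$ in \ACPr; the novelty over~\cite{Fok00} is only that one now handles countably many summands and countably many pair-variables, which is harmless since \ACPr\ admits countably infinite guarded recursive specifications and its derivability relation is that of infinitary conditional equational logic.
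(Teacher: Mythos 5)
Your proposal is correct and follows essentially the same route as the paper: reduce the terms to (constants over) finite or countably infinite linear recursive specifications via Theorem~\ref{theorem-reduction}, and then invoke completeness for linear specifications. The only difference is that the paper simply cites Theorem~4.4.1 of~\cite{Fok00} and observes that its proof does not rely on finiteness, whereas you inline that proof (the paired-variable construction discharged by RSP) in the countably infinite setting.
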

\begin{proof}
Theorem~4.4.1 from~\cite{Fok00} states that, for all closed \ACPr\ 
terms $t$ and $t'$ in which only constants $\rec{X}{E}$ occur where $E$ 
is a finite linear recursive specification, $t \bisim t'$ only if 
$\ACPr \Ent t = t'$.
We can strengthen this theorem by dropping the finiteness condition 
because the proof given in~\cite{Fok00} does not rely on it.
It follows immediately from the strengthened version of Theorem~4.4.1 
from~\cite{Fok00} and Theorem~\ref{theorem-reduction} from the current 
paper that, for all closed \ACPr\ terms $t$ and $t'$, $t \bisim t'$ only 
if $\ACPr \Ent t = t'$.
\qed
\end{proof}
To the best of our knowledge, the completeness of the axioms of \ACPr\ 
with respect to bisimulation equivalence has as yet only been proved for 
the special case of finite linear recursive specifications.
Crucial for the completeness for the general case is that infinitary 
conditional equational logic is used in deriving equations from the 
axioms of \ACPr.
The use of this logic is inescapable with infinite guarded recursive 
specifications.
This speaks for itself, but it is virtually unmentioned in the 
literature on process algebra.

\section{Concluding Remarks}
\label{sect-concl}

We have widened the existing completeness result for \ACPr. 
A by-product of this work is the following expressiveness result: in 
each model of \ACPr, the processes that can be described by a guarded 
recursive specification over \ACP\ and the processes that can be 
described by a finite or countably infinite linear recursive 
specification over \ACP\ are the same.
Notice that even uncountably infinite guarded recursive specifications 
over \ACP\ can be reduced to finite or countably infinite linear 
recursive specifications over \ACP.

\bibliographystyle{splncs03}
\bibliography{PA}

\end{document}